\newlength{\Oldarrayrulewidth}
\newtheorem{thm}{Theorem}[section]
\newtheorem{lem}[thm]{Lemma}
\algnewcommand\Input{\item[{\textbf{Input:}}]}
\algnewcommand\Output{\item[{\textbf{Output:}}]}
\algnewcommand\Initialize{\item[{\textbf{Initialize:}}]}
\algnewcommand{\return}[1]{
  \State \textbf{return:}
  \Statex \hspace*{\algorithmicindent}\parbox[t]{.8\linewidth}{\raggedright #1}
}
\begin{document}
\title{Bayesian Approach with Extended Support Estimation  for Sparse Regression}

\author{Kyung-Su~Kim,~Sae-Young Chung\\
School of Electrical Engineering, Korea Advanced Institute of Science and Technology\\
Email: {kyungsukim, schung}@kaist.ac.kr}

\markboth{ }
{Shell \MakeLowercase{\textit{et al.}}: Bare Demo of IEEEtran.cls for Journals}

\maketitle

\begin{abstract}
A greedy algorithm called Bayesian multiple matching pursuit (BMMP) is proposed to estimate a sparse signal vector and its support given $m$ linear measurements. Unlike the maximum a posteriori (MAP) support detection, which was proposed by Lee to estimate the support by selecting an index with the maximum likelihood ratio of the correlation given by a normalized version of the orthogonal matching pursuit (OMP), the proposed method uses the correlation given by the matching pursuit proposed by Davies and Eldar. BMMP exploits the diversity gain to estimate the support by considering multiple support candidates, each of which is obtained by iteratively selecting an index set with a size different for each candidate. In particular, BMMP considers an extended support estimate whose maximal size is $m$ in the process to obtain each of the support candidates. It is observed that BMMP outperforms other state-of-the-art methods and approaches the ideal limit of the signal sparsity in our simulation setting.
\end{abstract}

\begin{IEEEkeywords}
sparse linear regression, compressed sensing, maximum a posteriori, extended support estimation, multiple support candidates
\end{IEEEkeywords}
\IEEEpeerreviewmaketitle

\section{Introduction}
\label{sec_intro}
Sparse linear regression, refered to as compressed sensing (CS) \cite{candes2006compressive,donoho2006compressed}, has been widely studied in many fields ranging from denoising \cite{metzler2016denoising} to super-resolution restoration \cite{yang2010image,heckel2016super}. {A signal vector $x^* \in \mathbb{R}^{n}$ is called $k$-sparse if it has at most $k$ non-zero elements.} Then, the objective of sparse linear regression is to recover a $k$-sparse signal vector $x^* \in \mathbb{R}^{n}$ from the given measurement vector $y \in \mathbb{R}^{m}$, such that 
\begin{align}\nonumber
y =\Phi x^* + w,
\end{align}
 where $\Phi \in \mathbb{R}^{m \times n}$ is a known sensing matrix with $m<n$ and a noise vector $w \in \mathbb{R}^{m }$. 

To improve recovery performance of the sparse signal $x^*$ and its support $\Omega$, the set of non-zero indices in $x^*$, a method called MAP support detection \cite{lee2016map} to recover a partial support of $x^*$ was proposed by exploiting the maximum a posteriori (MAP) estimation. This method determines whether the index $i$ belongs to $\Omega$ by using a likelihood ratio test under the true and null hypotheses given a correlation between the residual vector and its $l_2$-normalized column vector $\phi_{i}/\left\|\phi_{i}\right\|$ in $\Phi:=[\phi_1,...,\phi_n] $. By applying this method to existing algorithms such as generalized OMP (gOMP) \cite{wang2012generalized}, compressive sampling matching pursuit (CoSaMP) \cite{needell2009cosamp}, and subspace pursuit (SP) \cite{dai2009subspace}, some greedy methods were proposed and shown to outperform other existing methods\cite{lee2016map}. 

We propose a new MAP-based method to estimate partial support by modifying the MAP support detection to use the correlation term given in rank aware order recursive matching pursuit (RA-ORMP) \cite{davies2012rank}. This proposed scheme improves the existing MAP support detection by additionally utilizing an orthogonal complement onto each column vector in $\Phi$ to calculate the correlation, since the correlation can be better represented through this consideration. 

Based on this method to estimate a partial support, we develop an algorithm called Bayesian multiple matching pursuit (BMMP) to recover $x^*$ and its support $\Omega$.
BMMP also uses the following four techniques to enhance the performance further. Simulation results show that BMMP outperforms other state-of-the-art methods in both noiseless and noisy cases. 
\begin{enumerate}
\item (Generating an extended support estimate of size bounded by $m$) The extended support is defined by an index set whose size is larger than the actual sparsity $k$ of $x^*$ and includes $\Omega$. gOMP \cite{wang2012generalized}, SP \cite{dai2009subspace}, and CoSaMP \cite{needell2009cosamp} exploited an extended support estimate in the process to estimate $\Omega$. To improve the support recovery performance, the maximum size of the extended support estimate in BMMP is set to $m$ while the maximum size of the extended support estimate in gOMP, SP, and CoSaMP, is $t \cdot \min(k,\left \lfloor {m/t} \right \rfloor)$ , $2k$, and $3k$, respectively,  for a constant $t$ smaller than $k$. The reason for setting this to $m$ in BMMP is explained in the section describing BMMP. 
\item (Generating multiple support candidates) BMMP utilizes the diversity gain by iteratively selecting multiple indices of different size to generate each of multiple support candidates.  In other words, BMMP generates the $i$th support candidate by iteratively selecting $i$ indices through the proposed partial support estimation. There have been related studies generating multiple support candidates through a tree structure: multipath matching pursuit \cite{kwon2014multipath}, multi-branch matching pursuit \cite{rossi2012spatial}. However, the complexity of the tree-based methods scales in the sparsity $k$ and is higher than those of other CS algorithms, since the number of support candidates increases exponentially with $k$ (the depth of the tree) due to the structural nature of the tree. In contrast with these methods, BMMP does not use the tree-based approach  to reduce complexity so that the number $g$ of support candidates is independent of $k$ and the complexity is linearly increased with $g$ in BMMP.
\item (Updating extended support estimate by replacing its subset) To improve the support recovery performance, SP \cite{dai2009subspace} and CoSaMP \cite{needell2009cosamp} update an extended support estimate $\Delta$ by selecting $k$ indices and replacing the rest of the selected indices in $\Delta$ at each iteration. BMMP modifies this technique by iteratively selecting $k/2$ indices in a given extended support estimate $\Delta$  and  updating its complement in $\Delta$.
\item A ridge regression proposed by Wipf and Rao\cite{wipf2004sparse} is used to improve robustness to noise. 
\end{enumerate}

\section{Notation}\label{sec_not}
The set $\{1,2,...,i\}$ is denoted by $\{1:i\}$. The submatrix of a matrix $A:=[a_1,...,a_n]$ $ \in \mathbb{R}^{m \times n}$, where $a_i$ is its $i$th column, with columns indexed by $J \subseteq \{1:n\}$ is denoted by $A_J$. $A^{Q}$ denotes the submatrix of $A$ with rows indexed by $Q \subseteq [m]$. $\mathcal{R}(A)$ denotes the range space spanned by the columns of $A$. $A^{\top}$ denotes the transpose of $A$. $P_{\mathcal{R}(A)}:=A(A^{\top}A)^{-1}A^{\top}$ and $P^{\perp}_{\mathcal{R}(A)}$ denote the orthogonal projection onto $\mathcal{R}(A)$ and its orthogonal complement, respectively. $\left \| \cdot \right \|$ denotes the Frobenius norm.
For a set $\Delta \subseteq \{1:n\}$ and a space $\mathcal{R}(A_{\Delta})$ of $\mathbb{	R}^d$, $\dot a_i({\Delta}):= {P^{\perp}_{\mathcal{R}(A_{\Delta})}a_i}/{\left \| P^{\perp}_{\mathcal{R}(A_{\Delta})}a_i \right \|}$, and $\dot a_i:= {a_i}/{\left \| a_i \right \|}$. Similarly, $\bar a_i({\Delta}):= P^{\perp}_{\mathcal{R}(A_{\Delta})}a_i$.

\section{Partial support estimation by MAP}
\label{sec_MAP}
The proposed MAP-based algorithm for estimating a partial true support is introduced in this section. We assume the following. $\Omega$ is uniformly distributed on $\{1:n\}$. Each nonzero element $u$ of $x^*$ is i.i.d and follows an arbitrary distribution $f_{x^*}(u)$ whose mean and variance are $m_x$ and $\sigma_x^2$, respectively. Each element in $\Phi$ and $w$ follows the Gaussian distribution whose mean is $0$ and variance is $\sigma^2$ and $\sigma_w^2$, respectively.

Suppose that $\Delta \subseteq \{1:n\}$ such that $|\Delta|=:d < m$ is a given partial support estimate.
The goal of the proposed partial support detection is to find an index set belonging to $\Omega \setminus \Delta$ by using the inner product used in RA-ORMP, $g_i(\Delta):= r(\Delta)^{\top} \dot \phi_i(\Delta)$ for $i \in \{1:n\}\setminus \Delta$, where $r(\Delta):=P^{\perp}_{\mathcal{R}(\Phi_{\Delta})}y$ is the residual vector, and $\dot \phi_i(\Delta):= {P}^{\perp}_{\mathcal{R}(\Phi_{\Delta})}\phi_i / \left \| {P}^{\perp}_{\mathcal{R}(\Phi_{\Delta})}\phi_i\right \|$. $g_i(\Delta)$ represents the correlation between the residual vector and $i$.

Although MAP support detection \cite{lee2016map} estimates a partial true support from the likelihood ratio of the inner product $h_i(\Delta):=   r(\Delta)^{\top}\dot \phi_i $, where $\dot \phi_i:= \phi_i / \left \| \phi_i\right \|$, the proposed method uses the likelihood ratio of $g_i(\Delta)$ for $i \in \{1:n\}\setminus \Delta$. Note that $g_i(\Delta)$ and $h_i(\Delta)$ use $\dot\phi_i(\Delta)$ and $\dot\phi_i$, respectively. {$h_i(\Delta)$ can be interpreted as the correlation term in a normalized version of OMP since the normalized vector $\dot \phi_i$ is used in $h_i(\Delta)$, whereas OMP uses $\phi_i$ in its correlation term $({P}^{\perp}_{\mathcal{R}(\Phi_{\Delta})}y)^{\top} \phi_i $.}
This indicates that the proposed method additionally considers the orthogonal complement ${P}^{\perp}_{\mathcal{R}(\Phi_{\Delta})}$ of $\Phi_{\Delta}$ for  the column vector $\phi_{i}$ to represent the correlation compared to MAP support detection. Since each vector space $\mathcal{R}(\dot\phi_i(\Delta))$ for $i \in \Omega \setminus \Delta$ is guaranteed to belong to a space $\cup_{j \in \Omega \setminus \Delta} \mathcal{R}({P}^{\perp}_{\mathcal{R}(\Phi_{\Delta})}\phi_j)$, which is extended from the residual space $ \mathcal{R}({P}^{\perp}_{\mathcal{R}(\Phi_{\Delta})}y)$, the index selection based on $g_i(\Delta)$ may more successfully find the index in $\Omega \setminus \Delta$ compared to that based on $h_i(\Delta)$.\footnote{It is not guaranteed that each vector space $\mathcal{R}(h_i(\Delta))$ for $i \in \Omega \setminus \Delta$ belongs to the extended residual space $\cup_{j \in \Omega \setminus \Delta} \mathcal{R}({P}^{\perp}_{\mathcal{R}(\Phi_{\Delta})}\phi_j)$.}

The proposed partial support detection uses an estimate of $|\Omega \setminus \Delta|$. This is obtained by the following derivations.
\begin{align}\nonumber
\mathbb{E}[\left \| r(\Delta) \right \|^2]
&\overset{(a)}= \sum\limits_{l \in \Omega \setminus \Delta} \mathbb{E}[ \left \| \bar \phi_l(\Delta) \right \|^2] v_x^2+ \mathbb{E}[\left \| \bar w(\Delta) \right \|^2 ] \\\nonumber
&\overset{(b)}{=}(|\Omega \setminus \Delta| \cdot \sigma^2 \cdot v_x^2 + \sigma_w^2) \cdot (m-d) ,
\end{align}
where $v_x:=\sqrt{(m^2_x+ \sigma_x^2)}$, (a) follows from the independence assumption, and (b) follows from Lemma \ref{Lemma1} (with $A=\Phi_{\Delta}$ and $b=\phi_l$ or $w$) in that $\mathbb{E}[ \left \| \bar \phi_l(\Delta) \right \|^2]= \sigma^2 \cdot (m-d)$ and $\mathbb{E}[\left \| \bar w(\Delta) \right \|^2 ]= \sigma_w^2 \cdot (m-d)$.
Then, the estimate $\Psi(\Delta)$ for $|\Omega \setminus \Delta|$ is given as
\begin{align}\label{ers}
\Psi(\Delta) :=\frac{\max(\left \| r(\Delta) \right \|^2 /  (m-d) - \sigma_w^2,0) }{  \sigma ^2  \cdot v_x^2  }.
\end{align}
Note that the following equalities hold for $i \in \{1:n\}\setminus \Delta$
\begin{align}\nonumber
\footnotesize
g_i(\Delta) &\overset{(a)}{=} \dot \phi_i(\Delta)^{\top} (\sum\limits_{l \in \Omega \setminus \Delta} \bar \phi_l(\Delta) x_l+ \bar w(\Delta)) \\\label{inva}
&= \left\| \bar \phi_i(\Delta) \right\| x_i + \sum\limits_{l \in \Omega \setminus ( \Delta\cup \{i\})} \dot \phi_i(\Delta)^{\top} \bar \phi_l(\Delta) x_l + \dot \phi_i(\Delta)^{\top} \bar w(\Delta),
\end{align}
where $x_i$ is the $i$th element of $x^*$ and (a) follows from $r(\Delta) = \sum\limits_{l \in \Omega \setminus \Delta} \bar \phi_l(\Delta) x_l+ \bar w(\Delta)$.
Our derivation in the rest of this section is based on \cite{lee2016map}.  From the equalities in (\ref{inva}), $g_i(\Delta)$ becomes the right-side terms in (\ref{hypo0}) and (\ref{hypo1}) under the null and true hypotheses---$\mathcal{T}_0$ and $\mathcal{T}_1$, respectively. For $\mathcal{T}_0$, $i$ does not belong to $\Omega$, and $x_i = 0$. For $\mathcal{T}_1$, $i$ belongs to $\Omega$, and $x_i=u$.

\begin{align}\label{hypo0}
\mathcal{T}_0 :g_i(\Delta) & = \sum\limits_{l \in \Omega \setminus\Delta} \dot \phi_i(\Delta)^{\top} \bar \phi_l(\Delta) x_l + \dot \phi_i(\Delta)^{\top} \bar w(\Delta)\\\label{hypo1}
\mathcal{T}_1 :g_i(\Delta) & = \left\| \bar \phi_i(\Delta) \right\| u+\sum\limits_{l \in \Omega \setminus ( \Delta\cup \{i\})} \dot \phi_i(\Delta)^{\top} \bar \phi_l(\Delta) x_l +\dot \phi_i(\Delta)^{\top} \bar w(\Delta)
\end{align}

From (\ref{inva}) and (\ref{hypo0}), the expectation and variance of $z_i:=g_i(\Delta)$ under the null hypothesis can be respectively approximated as follows:
\begin{align}\nonumber
\mathbb{E}[z_i|x_i=0]&= \sum\limits_{l \in \Omega \setminus\Delta} \mathbb{E}[(\dot \phi_i(\Delta)^{\top} \bar \phi_l(\Delta))] m_x+ \mathbb{E}[(\dot \phi_i(\Delta)^{\top} \bar w(\Delta))], \\\label{exp1}
&=\sum\limits_{l \in \Omega \setminus\Delta} \mathbb{E}[(\dot \phi_i(\Delta)^{\top} \phi_l)] m_x+ \mathbb{E}[(\dot \phi_i(\Delta)^{\top} w)]\overset{(a)}{=}0
\end{align}
\begin{align}\nonumber
\mathbb{E}[z_i^2|x_i=0] &= \sum\limits_{l \in \Omega \setminus\Delta} \mathbb{E}[(\dot \phi_i(\Delta)^{\top} \bar \phi_l(\Delta))^2] v_x^2+ \mathbb{E}[(\dot \phi_i(\Delta)^{\top} \bar w(\Delta))^2]\\\nonumber
&=\sum\limits_{l \in \Omega \setminus\Delta} \mathbb{E}[(\dot \phi_i(\Delta)^{\top}\phi_l)^2] v_x^2+ \mathbb{E}[(\dot \phi_i(\Delta)^{\top}w)^2]\\\nonumber
&\overset{(a)}{=}|\Omega \setminus\Delta|\cdot \sigma^2 v_x^2+\sigma_w^2 \\\label{var1}
&\overset{(b)}{\simeq}\Psi(\Delta) \cdot \sigma^2 v_x^2+\sigma_w^2 =:\tilde \sigma^2_0,
\end{align}
where (a) and (b) follow from the circular symmetry of the Gaussian distribution and (\ref{ers}), respectively. Similarly, by (\ref{hypo1}) and Lemma \ref{Lemma1} (with $A=\Phi_{\Delta}$ and $b=\phi_i$), the expectation and variance of $z_i$ under the true hypothesis are obtained respectively as
\begin{align}\label{exp2}
&\mathbb{E}[z_i|x_i=u] = \mathbb{E}[\left\| \bar \phi_i(\Delta) \right\|_2 u ]={\sigma \tau} u
\end{align}
and
\begin{align}\nonumber
\mathbb{E}[(z_i-\mathbb{E}[z_i])^2|x_i=u] &\simeq\sum\limits_{l \in \Omega \setminus (\Delta \cup \{i\}) } \mathbb{E}[(\dot \phi_i(\Delta)^{\top} \bar \phi_l(\Delta))^2] v_x^2+ \mathbb{E}[(\dot \phi_i(\Delta)^{\top} \bar w(\Delta) )^2]\\\label{var2}
&\simeq(\Psi(\Delta)-1) \cdot \sigma^2 v_x^2+\sigma_w^2 =:\tilde \sigma^2_1,
\end{align}
where $\tau:={\sqrt{2} \cdot \Gamma(\frac{1+m-d}{2})}/{\Gamma(\frac{m-d}{2})}$, and $\Gamma(\cdot)$ is the Gamma function.

Then, using (\ref{exp1})--(\ref{var2}), we obtain the log-likelihood ratio $\Theta(z_i) := \ln ({\mathbb{P}(i \in \Omega | z_i)}/{\mathbb{P}(i \notin \Omega | z_i)})$ as follows.
\begin{align}\nonumber
\Theta(z_i) &\propto \ln (\frac{\mathbb{P}(z_i|i \in \Omega)}{\mathbb{P}(z_i|i \notin \Omega)})\\\nonumber
&= \ln (\frac{\int_{-\infty}^{\infty}\mathbb{P}(z_i|x_i=u)f_{x^*}(u)du}{\mathbb{P}(z_i|x_i=0)})\\\label{loglikelihood}
&\overset{(a)}{\simeq} { \ln (\frac{\int_{-\infty}^{\infty}\frac{1}{\tilde \sigma_1 \sqrt{2 \pi}} \exp (- \frac{|z_i-{\sigma \tau} u|^2}{2 \tilde \sigma_1^2})f_{x^*}(u)du}{\frac{1}{\tilde \sigma_0 \sqrt{2 \pi}} \exp (- \frac{|z_i|^2}{2 \tilde \sigma_0^2})}) },
\end{align}
where (a) follows from the Gaussian approximation of the conditional distributions of $z_i$ given $x_i=0$ ($\mathcal{T}_0$) by using (\ref{exp1}) and (\ref{var1}) and given $x_i=u$ ($\mathcal{T}_1$) by using (\ref{exp2}) and (\ref{var2}).

Therefore, the proposed method estimates a partial true support as an index set $\Lambda$ by selecting the $|\Lambda|$ largest ratios $\Theta(z_i)$ for $i  \in \{1:n\} \setminus \Delta$ in $(\ref{loglikelihood})$. Note that $\Theta(z_i)$ is propotional to (\ref{L}) when each nonzero element of $x^*$ follows the uniform distribution $f_{x^*}(u;a,b) = 1/(b-a)$ for $a \leq u\leq b$
\begin{align}\label{L}
\frac{(z_i)^2}{2 \tilde\sigma_0^2} +\ln (\textup{erf}(\frac{{\sigma \tau} b - z_i}{\tilde \sigma_1 \sqrt{2}})-\textup{erf}(\frac{{\sigma \tau} a - z_i}{\tilde \sigma_1 \sqrt{2}})),
\end{align}
where $\textup{erf}(x):= \frac{2}{\sqrt{\pi}} \int_0^x e^{-t^2}dt$.

\begin{section}{Algorithm description}\label{sec_algd}
\begin{algorithm}[t]
  \caption{BMMP($y,\Phi,k,g,\epsilon, \lambda,\boldsymbol{p}$)}
    \begin{algorithmic}[1]
	\Input{$y \in \mathbb{R}^{m}, \Phi \in \mathbb{R}^{m \times n}$, $(g,k)\in \mathbb{N}^2$, $				\boldsymbol{p}:=(\sigma,m_x,\sigma_x,\sigma_w) \in \mathbb{R}^4$, $\epsilon \in \mathbb{R}$}
	\Output{$\hat x \in \mathbb{R}^{n}, \hat \Omega \subseteq \{1:n\}$}
	\For{ $t=1 \textup{ to } g$}
	\State $(\Delta,i) \gets (\{\}, 1)$, where $\{\}$ is the empty set.
	 \While{$i=1$ or $\left \| P^{\perp}_{\mathcal{R}(\Phi_{\bar \Omega_{i}})}y \right \| < \left \| P^{\perp}_{\mathcal{R}(\Phi_{\bar \Omega_{i-1}})}y \right \| $}
	\While{ $|\Delta| \neq m$ and $\left \| P^{\perp}_{\mathcal{R}(\Phi_{\Delta})}y \right\| > \epsilon$}
	\State $z_q \gets ({P}^{\perp}_{\mathcal{R}(\Phi_{\Delta})}y)^{\top} \dot \phi_q(\Delta) \textup{ for } q \in \Delta^c := \{1:n\} \setminus \Delta$
	\State $\bar z_q \gets  \Theta(z_q;y,\Delta,\sigma,m_x,\sigma_x,\sigma_w)  \textup{ for }q \in \Delta^c $
	\State $v \gets \min(t,m-|\Delta|)$
	\State $\Lambda \gets \{ \textup{indices $q$ of the $v$-largest values of $\bar z_q$ in $\Delta^c $} \}$
	\State $\Delta \gets \Delta \cup \Lambda$
	\EndWhile
	\State $\bar{x}^{{\Delta}}  \gets (\Phi^{\top}_{{\Delta}}\Phi_{{\Delta}}+ \eta^2 D({\gamma})^{-1})^{-1}\Phi^{\top}_{{\Delta}} y  $  where $({\gamma},\eta)$ is obtained by $\underset{{{\bar \gamma} \in \mathbb{R}^{a},\bar\eta \in \mathbb{R}}}{\arg\min\limits} \, L(\Phi_{{\Delta}},y)$
	\For{$q \in \Delta $}
	\State   $\zeta^i_q \gets | \bar{x}_q| $,  where $\bar{x} = (\bar{x}_1,...,\bar{x}_n)^{\top}$
	\EndFor
	\State $\bar \Omega_{i+1} \gets \{ \textup{indices $z$ of the $k$-largest values of ($\zeta^i_z$) in $\Delta$}\}$
	\State $\Delta \gets \{ \textup{indices $z$ of the $\left \lfloor {k/2} \right \rfloor$-largest values of ($\zeta^i_z$) in $\Delta$}\}$
	\State{$i \gets i +1 $}
	\EndWhile
	\State{$ \dot\Omega_t \gets \bar \Omega_{i-1}$}
	\If{$\left \| P^{\perp}_{\mathcal{R}(\Phi_{\dot\Omega_t })} y \right \| \leq \epsilon  $} go to step 23
	\EndIf
	\EndFor
	\State $p \gets \underset{b \in \{1:t\}}{\arg \min} \,\, \left \| P^{\perp}_{\mathcal{R}(\Phi_{\dot \Omega_{b}})} y \right \|$
	\State \Return{$ (\hat\Omega:={\dot\Omega}_{p}, \hat x:= \underset{x \in \mathbb{R}^n}{\arg\min}\left \| \Phi_{{\dot\Omega}_{p}} x^{{\dot\Omega}_{p}} - y \right \|)$}
  \end{algorithmic}
\label{alg:BMMP}
\end{algorithm}
\begin{figure}[t]
\centering
\subfigure{\includegraphics[width=8.5cm,height=6.6cm]{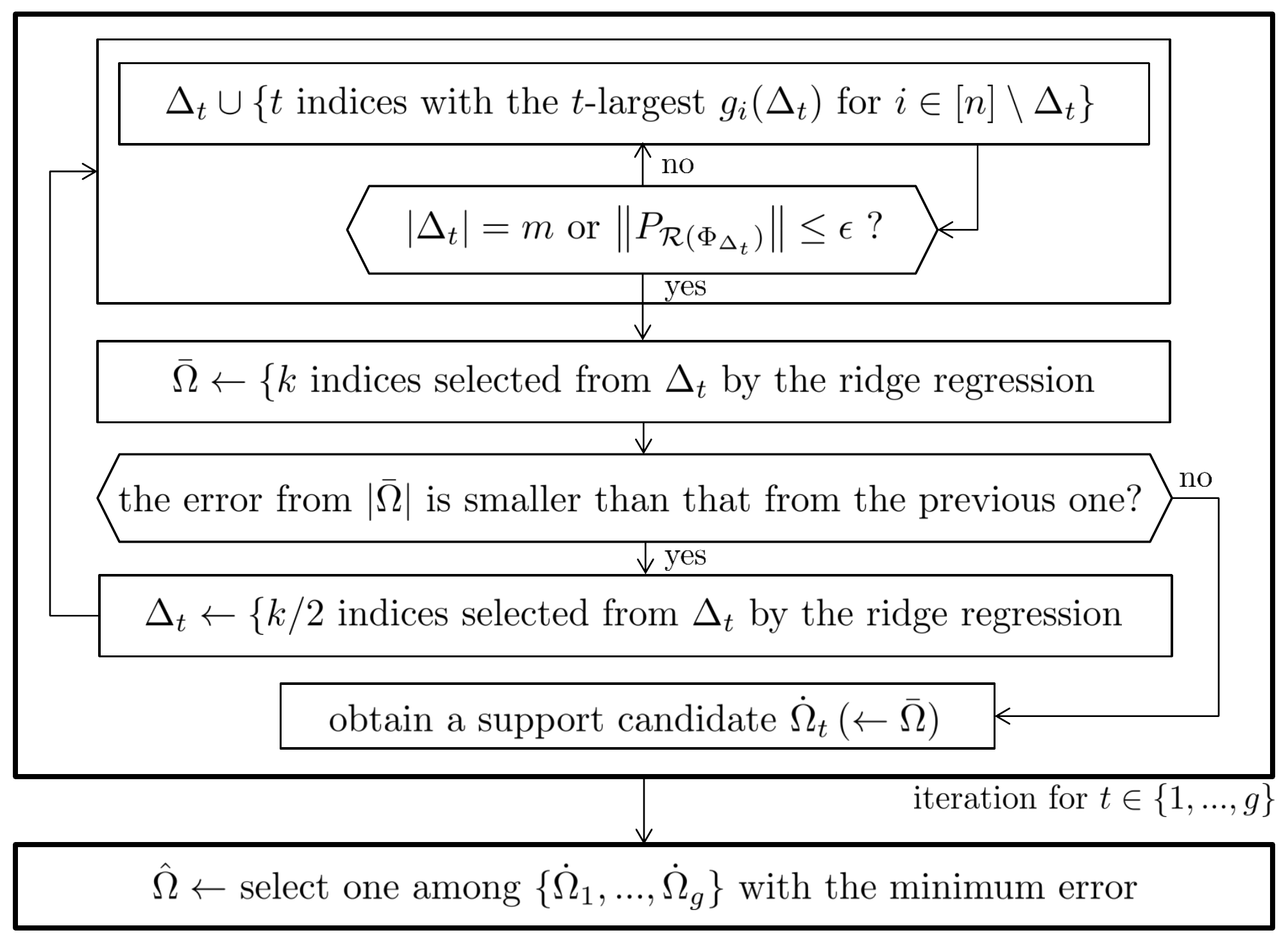}}
\caption{Description of BMMP:}  {$\hat \Omega$ is the support estimate obtained from BMMP and $g$ is the number of multiple support candidates among which $\hat \Omega$ is selected. We note that the $t$th support candidate $\bar \Omega_t$ is obtained by selecting $k$ incides from an extended support estimate $\Delta_t$ whose size is smaller than or equal to $m$, which is generated by iteratively selecting $t$ indices of the $t$-largest log-likelihood ratios $\Theta(g_i(\Delta_t))$ for $i\in \{1:n\}\setminus \Delta_t$ and adding them to $\Delta_t$.}
\label{Figure1}
\end{figure}

The proposed method, BMMP (Algorithm \ref{alg:BMMP}), for estimating ${x^*}$ and $\Omega$ is introduced in this section. BMMP returns ($\hat x, \hat \Omega$) as estimates of (${x^*},\Omega$) given the tuple ($y,\Phi,k,g,\epsilon$), where $g$ is the number of support candidates from which the final support is estimated, and $\epsilon$ is a threshold used to terminate BMMP when the residual error is smaller than this value. The estimate $\hat \Omega$ is obtained in Steps 23--24 by selecting one of $g$ support candidates ($\dot \Omega_{t}$ for $t \in \{1:g\}$) with the minimum residual error. The $t$th support candidate $\dot \Omega_{t}$ in Step 23 is obtained by the iteration in Steps $3$--$18$ consisting of the following four stages.

In the first stage (generate an extended support estimate $\Delta$) in Steps 4--10, a set $\Lambda$ of $t$ indices in Step 8 is iteratively obtained through the proposed partial support estimation\footnote{We note that $\Theta(z_i)$ in (\ref{L}) is calculated by using the following values: $(y,\Delta, \sigma, $ $m_x, \sigma_x, \sigma_w)$. Step 6 in Algorithm \ref{alg:BMMP} specifies these values as inputs of $\Theta(\cdot)$.} and added to a set $\Delta$ in Step 9 until its size $|\Delta|$ is equal to $m$ or the residual error obtained from the estimate is smaller than $\epsilon$, i.e.  $\left \| P^{\perp}_{\mathcal{R}(\Phi_{\Delta})}y \right\| \leq  \epsilon$. In the second stage (generate a temporary estimate $\bar x$ of ${x^*}$) in Step 11, an estimate $\bar x$ of the signal vector supported on $\Delta$ is obtained by the ridge regression described in Step 11. In the third stage in Steps 12--15, a temporary support estimate $\bar\Omega_{i+1}$ is obtained by selecting the $k$-largest elements in $\bar x^{\Delta}$. The fourth stage in Steps 16--17 generates a subset of $\Delta$ by selecting $k/2$ indices from $\Delta$. Then, the extended support estimate $\Delta$ is reinitialized as the subset in the next iteration step. This iteration terminates and the $t$th support candidate $\dot \Omega_{t}$ is obtained in Step 19 when the residual error obtained from the temporary support estimate $\bar \Omega_{i+1}$ in Step 15 does not decrease further in Step 3.

To enhance the robustness to noise, a ridge regression proposed by Wipf and Rao \cite{wipf2004sparse} is used in Step 11 with the following parameters $(\gamma,\eta)$ obtained by minimizing the cost $L(\cdot)$ in (\ref{sble}). 
\begin{align}\label{sble}
(\gamma,\eta) = \underset{{\bar \gamma \in \mathbb{R}^{a},\bar \eta \in \mathbb{R}}}{\arg\min\limits} \,L(\bar \Phi,y):=\, \log |\Sigma|+ y^{\top}\Sigma^{-1}y, 
\end{align}
where $\bar \Phi := \Phi_{\Delta} \in \mathbb{R}^{m \times a}$ is a submatrix of $\Phi$ with columns indexed by $\Delta$, $\Sigma = (\bar \eta^{-2} {\bar\Phi}^{\top}{\bar\Phi}+ D(\bar \gamma))^{-1}$, and $D(\bar \gamma)$ is the diagonal matrix whose $i$th diagonal element is $\gamma_i$ in $\bar \gamma=(\gamma_1,...,\gamma_a)^{\top}$. We obtain an appoximated solution of $(\ref{sble})$ by using sparse bayesian learning (SBL) \cite{wipf2004sparse} to reduce the complexity. 

In the noiseless case, the parameters $(\bar \gamma,\bar \eta)$ in the ridge regression are set to zero so that the ridge regression becomes the least-square regression ($\underset{x \in \mathbb{R}^n}{\arg\min}\left \| \Phi_{\Delta} x^{\Delta } - y \right \|$).  The least-squares method has a lower complexity than SBL and it is well-known that the least-squares method provides a unique solution as ${x^*}$ in the noiseless case if $|\Delta| \leq m$, $\Delta \supseteq \Omega$, and $\Phi_{\Delta}$ has the full column rank. If $\Delta \supseteq \Omega$ holds, the inversion problem of CS can be simplified as an easier problem where $\Phi$ is replaced by its submatrix $\Phi_{\Delta}$ compared to the original problem. When $\Delta$ has a larger size, the probability of satisfying $\Delta \supseteq \Omega$ is greater. For this reason, we set the maximal size of the extended support estimate $\Delta$ as $m$. Besides, it is guaranteed from Lemma \ref{Lemma2} that in the noiseless case, $\Omega$ is a subset of $\Delta$ ($\Delta \supseteq \Omega$) if $|\Delta| < m$ and $\left \| P^{\perp}_{\mathcal{R}(\Phi_{\Delta})}y \right \|$ is equal to zero. Based on this fact, to minimize the size of $\Delta$ satisfying $\Delta \supseteq \Omega$, we increase the size of $\Delta$ from $0$ to $m$ and find $\Delta$ satisfying the criterion $\left \| P^{\perp}_{\mathcal{R}(\Phi_{\Delta})}y \right\| \leq \epsilon$ in Steps 4--10. Figure \ref{Figure1}  illustrates the procedure of BMMP.
\end{section}

\begin{figure}[t]
  \centering
     \subfigure[Support recovery rate\label{Figure2a}]{\includegraphics[width=7.5cm,height=4.2cm]{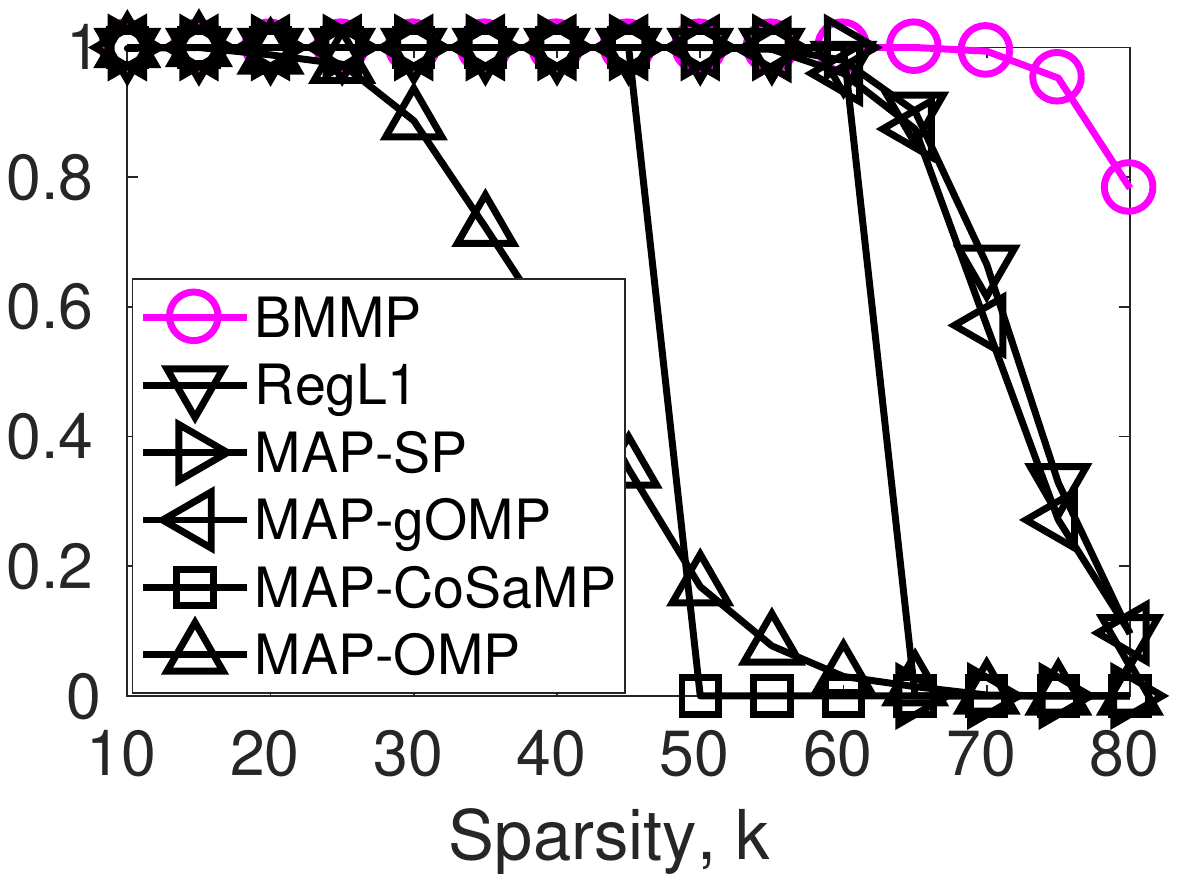}}
     \subfigure[Execution time\label{Figure2b}]{\includegraphics[width=7.5cm,height=4.2cm]{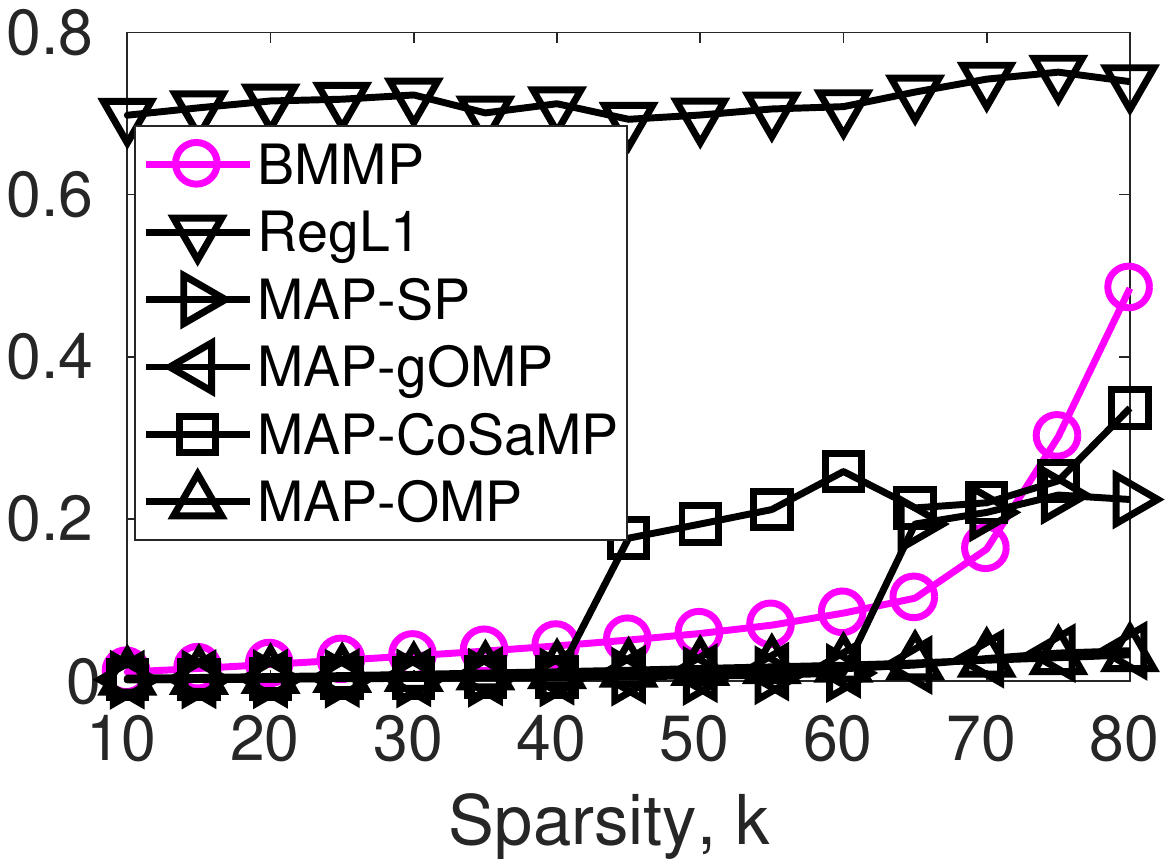}}
     \subfigure[Mean squared error\label{Figure2c}]{\includegraphics[width=7.8cm,height=4.2cm]{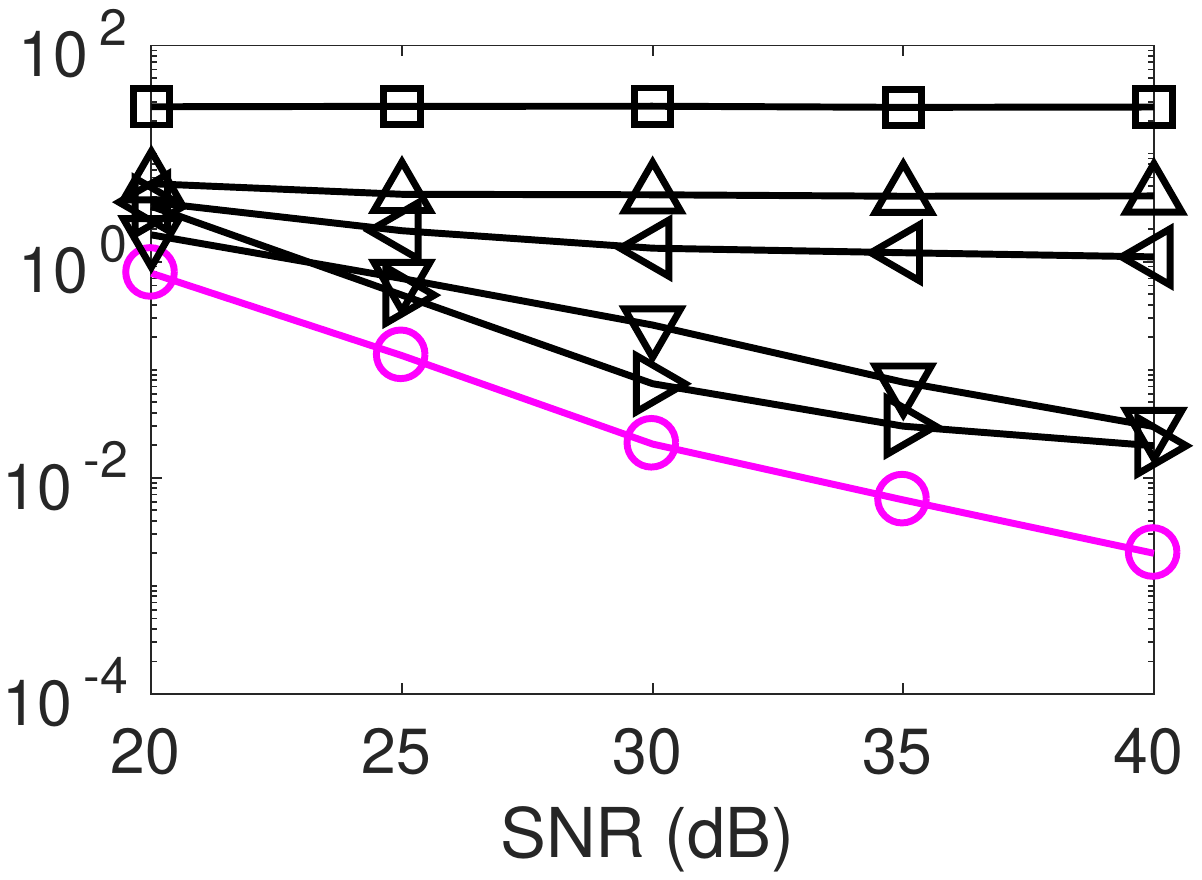}}
     \subfigure[Support recovery rate\label{Figure2d}]{\includegraphics[width=7.7cm,height=4.2cm]{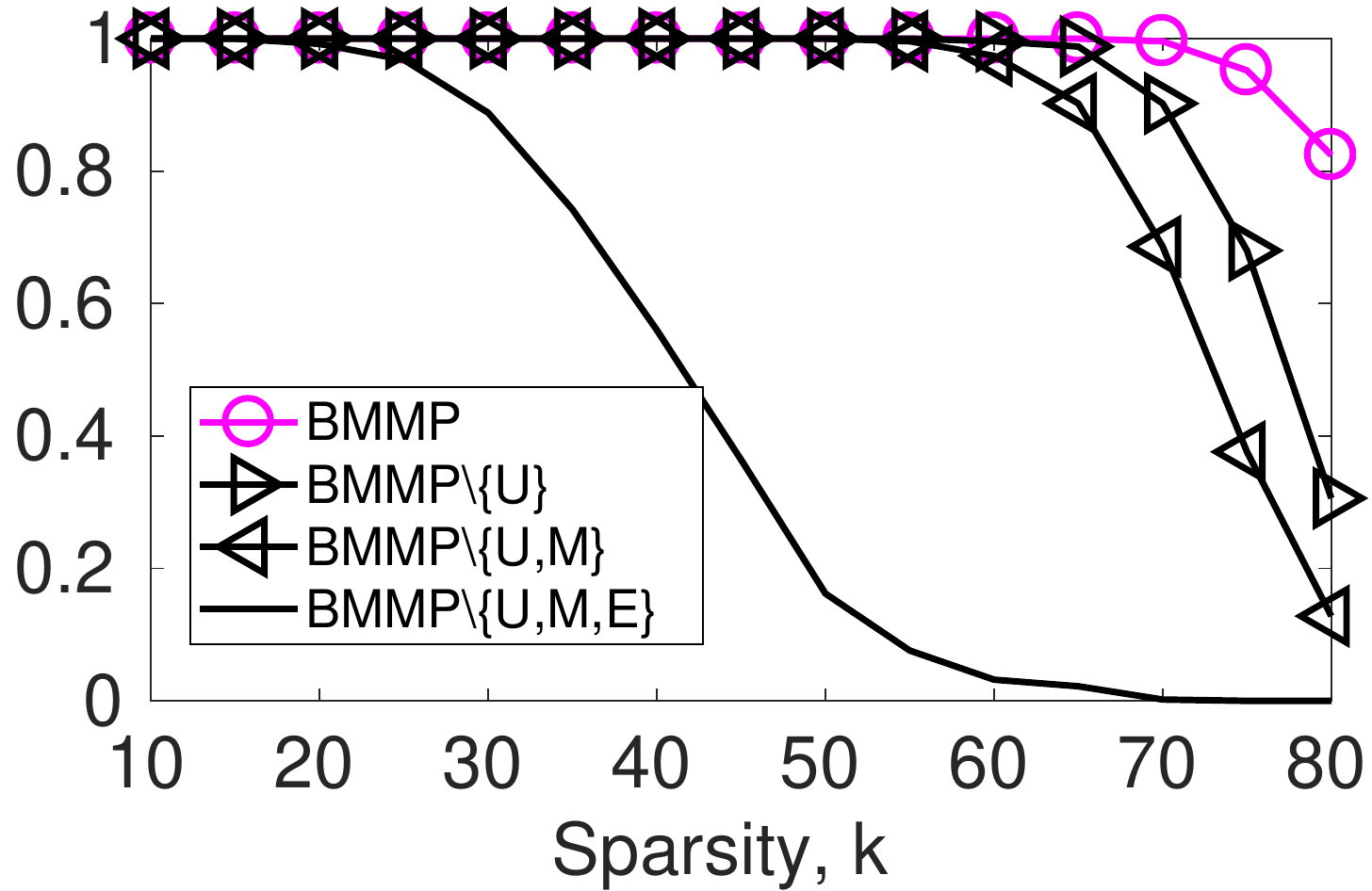}}
  \caption{Performance comparison of BMMP and related algorithms}
  \label{Figure2}
\end{figure}

\begin{figure} 
  \centering
     \subfigure[Support recovery rate]{\includegraphics[width=7.8cm,height=4.2cm]{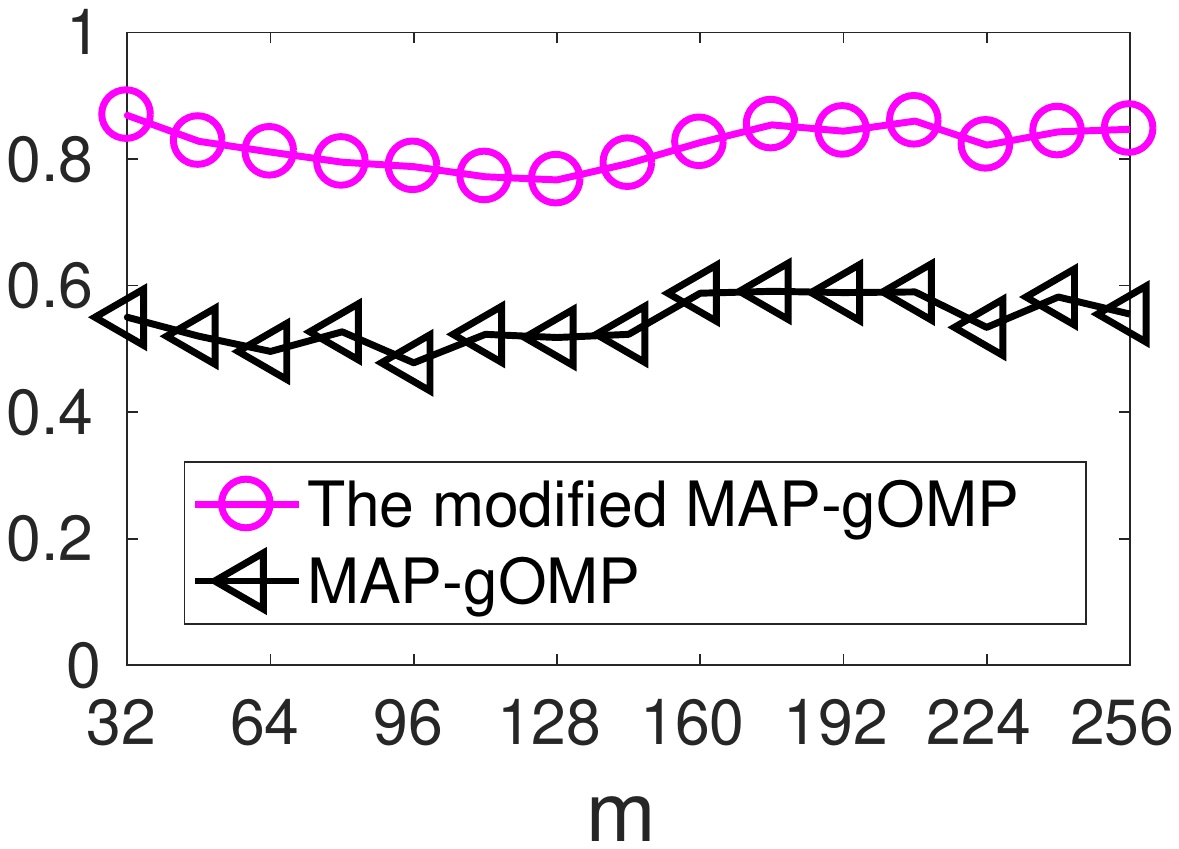}}
  \caption{Performance comparison of the proposed partial support detection and the MAP support detection}
  \label{Figure3}
\end{figure}

\section{Numerical experiments}
\label{sec_ne}

We compare the performance of BMMP and existing state-of-the-art algorithms such as MAP-gOMP\footnote{Two indices are selected at each step in MAP-gOMP.}, MAP-SP, MAP-CoSaMP, MAP-OMP \cite{lee2016map}, RegL1 ($l_{1}$ norm minimization with a nonnegative constraint).\footnote{RegL1 outputs the signal estimate $\hat x=(x_1,...,x_n)^{\top}$ by minimizing $\sum_i |x_i|$ such that $x_i \geq 0$ for $i \in \{1:n\}$ and $\left \| y-\Phi \hat x \right \| \leq \tau$ where $\tau$ is the noise magnitude.} We note that MAP-SP, MAP-CoSaMP, MAP-OMP are algorithms based on MAP support detection \cite{lee2016map}. Each entry of $\Phi$ is i.i.d. and follows the Gaussian distribution $\mathcal{N}(0,1/m)$ with mean zero and variance $1/m$. The elements of ${(x^*)}^{\Omega}$ are independently and uniformly  sampled from $0$ to $1$.
Then, $(\sigma,m_x,\sigma_x)$ is $(1/\sqrt{m},1/2,1/\sqrt{12})$.

\begin{figure*}[t]
  \centering
  \footnotesize
\subfigure{\includegraphics[width=18cm]{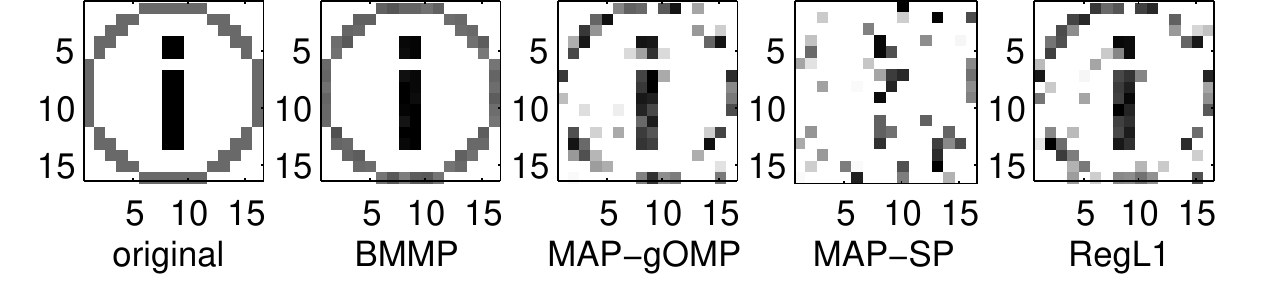}}
\caption{Reconstruction performance comparison of a sparse image with compressed and noisy measurements.}
\label{Figure4}
\end{figure*}

The rate of successful support recovery, i.e., $\hat \Omega = \Omega$, and the execution time of each algorithms in the noiseless case are shown in Fig. \ref{Figure2a} and Fig. \ref{Figure2b}, respectively. ($m,n$) is set to ($128,256$) in Fig. \ref{Figure2}. We evaluate BMMP whose input $(g,\epsilon,\lambda)$ is set to $(4,\left \| y \right \| \cdot 10^{-\textup{SNR}_{\textup{dB}}/20},\epsilon^2/m)$, where SNR$_{\textup{dB}}$ is the signal-to-noise ratio (SNR $:= \mathbb{E}\left \| \Phi_{\Omega} x^{\Omega}_0 \right \|^2  / \, \mathbb{E}\left \|w \right \|^2$) in decibels. {It has been shown \cite{foucart2013mathematical} that in the noiseless case, the maximal sparsity $\bar k$ required for an ideal approach to recover any signal vector ${x^*}$ such that $|\Omega| \leq \bar k$ is equal to $m/2$ given $m$, where $\Omega$ is the support of ${x^*}$; the $l_0$  bound is denoted by this ideal limit of the signal sparsity ($m/2$).}  Fig. \ref{Figure2a} and Fig. \ref{Figure2b} show that BMMP has a lower complexity than RegL1, outperforms other state-of-the-art methods, and approaches the $l_0$  bound, i.e., $m/2=64$.

Fig. \ref{Figure2c} illustrates the mean squared error $\mathbb{E}\left \| \hat x - {x^*} \right \|^2$ of each algorithm when $\textup{SNR}$ is varied from $20$ to $40$ dB\footnote{In the noisy case, $x^*$ is randomly sampled such that each of its non-zero elements is ranged from $0.1$ to $1$.}, the sparsity $k$ is fixed to $60$. It is observed that the signal reconstruction performance of BMMP is better than those of other methods in the noisy case.

Fig. \ref{Figure2d} shows how each of the following three techniques used in BMMP contributes to the performance improvement of BMMP: exploiting the extended support estimate with its maximal size equal to $m$ ($\mathcal{E}$), using multiple support estimates ($\mathcal{M}$), and iteratively updating the support estimate by replacing its subset ($\mathcal{U}$).  BMMP$\setminus \{\mathcal{U}\}$ refers to BMMP without $\mathcal{U}$ by setting the value $\left \lfloor {k/2} \right \rfloor$ to $0$ in step 16 in Algorithm \ref{alg:BMMP}. BMMP$\setminus \{\mathcal{U},\mathcal{M}\}$ denotes BMMP without $\{\mathcal{U},\mathcal{M}\}$ by setting $g$ to $1$ in BMMP$\setminus \{\mathcal{U}\}$. Similarly, BMMP$\setminus \{\mathcal{U},\mathcal{M},\mathcal{E}\}$ indicates BMMP without $\{\mathcal{U},\mathcal{M},\mathcal{E}\}$ such that $m$ (the maximum size of the extended support estimate $\Delta$) shown in steps 4-10 in Algorithm \ref{alg:BMMP} is set to $k$ and the remaining settings follow BMMP$\setminus \{\mathcal{U},\mathcal{M}\}$.
It is shown that the performance improves in the order of BMMP$\setminus \{\mathcal{U},\mathcal{M},\mathcal{E}\}$, BMMP$\setminus\{\mathcal{U},\mathcal{M}\}$, BMMP$\setminus \{\mathcal{U}\}$, and BMMP. {Fig. \ref{Figure2d} shows that the consideration of $\mathcal{E}$ contributes to a significant performance improvement of BMMP.}

To compare performance of the proposed partial support estimation and MAP support detection \cite{lee2016map}, we compare performance of MAP-gOMP and its variant, which is obtained by replacing the MAP support detection with the proposed partial support estimation in MAP-gOMP. Fig. \ref{Figure3} shows the rate of successful support recovery of these two algorithms in the noiseless case where $m$ is varied from $32$ to $256$ with $k=\left \lfloor {m/1.8} \right \rfloor$ and $n=2m$. It is observed  that the proposed partial support estimation outperforms the MAP support detection \cite{lee2016map}.

Fig. \ref{Figure4} shows the performance for reconstructing a grayscale sparse image with a size of $16 \times 16$ pixels in the noisy case when SNR = $25$ dB. This image is compressed by using  $\Phi \in \mathbb{R}^{138 \times 256}$, whose elements are i.i.d. and sampled from the Gaussian distribution $\mathcal{N}(0,$ $1/138)$ with mean $0$ and variance $1/138$. It is observed in Fig. \ref{Figure4} that BMMP recovers the image better than other methods.

\section{Conclusion} 
We presented BMMP, which updates multiple extended support estimates with each size equal to $m$ and performs a likelihood ratio test given the correlation term in RA-ORMP. The numerical results show that BMMP achieves an improvement in performance compared to existing state-of-the-art methods, even with noisy measurements. Future related studies will consider the application of a technique for estimating the moments of $\Phi$, $w$, and $x^*$ to BMMP and the development of BMMP in the case where the distribution of $x^*$ is designable or the a priori information of $x^*$ is available, i.e., a communication system for nonorthogonal multiple access \cite{wang2016joint,cirik2017multi}.

\appendix
\section{Lemmas}
\begin{lem}\label{Lemma1} Let $b\in \mathbb{R}^{m}$ be a vector such that its elements are i.i.d. and follow the Gaussian distribution $\mathcal{N}(0,\hat \sigma^2)$, and let $A:=[a_1,...,a_{d}] \in \mathbb{R}^{m \times d}$ be a matrix $A$ with rank $d<m$. Then, it is guaranteed that   
$\mathbb{E}[\left \| P^{\perp}_{\mathcal{R}(A)}b \right \|]=\hat \sigma \cdot \tau$ and $\mathbb{E}[\left \| P^{\perp}_{\mathcal{R}(A)}b \right \|^2]=\hat \sigma^2 \cdot (m-d)$, where $\tau:={\sqrt{2} \cdot \Gamma(\frac{1+m-d}{2})}/{\Gamma(\frac{m-d}{2})}$, and $\Gamma(\cdot)$ is the Gamma function.
\end{lem}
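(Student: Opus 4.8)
The plan is to exploit the fact that $P^{\perp}_{\mathcal{R}(A)}$ is a symmetric idempotent matrix of rank $m-d$, together with the rotational invariance of the isotropic Gaussian vector $b$. Throughout I would treat $A$ as a fixed matrix of rank $d$, since the claimed quantities depend on $A$ only through $m$ and $d$; the sole randomness is in $b$. Writing $P := P^{\perp}_{\mathcal{R}(A)}$ for brevity, I would first record the three facts I use repeatedly: $P = P^{\top}$, $P^2 = P$, and $\rank(P) = m-d$.

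For the second moment I would write $\left\| Pb \right\|^2 = b^{\top} P^{\top} P b = b^{\top} P b$ using symmetry and idempotency. Taking expectations and applying the cyclic property of the trace gives $\mathbb{E}[b^{\top} P b] = \mathbb{E}[\operatorname{tr}(P b b^{\top})] = \operatorname{tr}(P\,\mathbb{E}[b b^{\top}]) = \hat\sigma^2 \operatorname{tr}(P)$, since $\mathbb{E}[b b^{\top}] = \hat\sigma^2 I_m$. Because the trace of an orthogonal projection equals its rank, $\operatorname{tr}(P) = m-d$, which yields $\mathbb{E}[\left\| Pb \right\|^2] = \hat\sigma^2 (m-d)$.

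For the first moment I would diagonalize the geometry. Let $U \in \mathbb{R}^{m \times (m-d)}$ have orthonormal columns spanning $\mathcal{R}(A)^{\perp}$, so that $P = U U^{\top}$ and $U^{\top} U = I_{m-d}$. Then $\left\| Pb \right\| = \left\| U U^{\top} b \right\| = \left\| U^{\top} b \right\|$, the last equality because $U$ preserves norms. Since $b \sim \mathcal{N}(0, \hat\sigma^2 I_m)$, the vector $U^{\top} b$ is again Gaussian with mean $0$ and covariance $U^{\top}(\hat\sigma^2 I_m) U = \hat\sigma^2 I_{m-d}$. Hence $\left\| U^{\top} b \right\| / \hat\sigma$ follows a chi distribution with $m-d$ degrees of freedom, whose mean is the standard value $\sqrt{2}\,\Gamma(\tfrac{1+m-d}{2})/\Gamma(\tfrac{m-d}{2}) = \tau$; multiplying through by $\hat\sigma$ gives $\mathbb{E}[\left\| Pb \right\|] = \hat\sigma \cdot \tau$.

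The main obstacle is the first-moment identity, and the crux is the reduction in the third paragraph: recognizing that projecting an isotropic Gaussian onto an $(m-d)$-dimensional subspace produces, in an orthonormal coordinate system for that subspace, an isotropic Gaussian of the reduced dimension, so that its norm is chi-distributed. Once this reduction is secured, the value of $\tau$ is simply the textbook mean of the chi distribution and requires no further computation; the second-moment claim is routine by comparison.
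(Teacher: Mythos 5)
Your proof is correct and, for the first moment --- the crux --- it is essentially the paper's own argument: the paper likewise takes an orthonormal basis $\{e_1,\dots,e_{m-d}\}$ of $\mathcal{R}(A)^{\perp}$ (your matrix $U$ collects these as columns), notes that the coordinates $b^{\top}e_j$ are $\mathcal{N}(0,\hat\sigma^2)$ by isotropy of the Gaussian vector, and concludes that $\left\| P^{\perp}_{\mathcal{R}(A)}b \right\|/\hat\sigma$ is chi-distributed with $m-d$ degrees of freedom, reading off the mean. The only divergence is the second moment, where the paper simply averages the resulting chi-squared distribution, while you use the trace identity $\mathbb{E}[b^{\top}Pb]=\hat\sigma^2\operatorname{tr}(P)=\hat\sigma^2(m-d)$ --- a marginally more general route, since it needs only $\mathbb{E}[bb^{\top}]=\hat\sigma^2 I_m$ rather than Gaussianity, but the difference is immaterial here.
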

\begin{proof}
Let $\{e_1,...,e_{m-d}\}$ be an orthonormal basis of $\mathcal{R}^{\perp}(A)$. Then, $P^{\perp}_{\mathcal{R}(A)}b = \sum_{j=1}^{m-d}(b^{\top} e_j)e_j$.
Note that each $b^{\top}e_{j}$  for $j \in \{1:m-d\}$ follows $\mathcal{N}(0,\hat \sigma^2 )$ by the isotropic property of the Gaussian vector \cite{tse2005fundamentals}. Since it follows that
\begin{align}\nonumber
\left \| P^{\perp}_{\mathcal{R}(A)}b \right \| = \sqrt{\sum_{j=1}^{m-d}(b^{\top} e_j)^{2}},
\end{align}
$\left \| P^{\perp}_{\mathcal{R}(A)}b \right \|/\hat \sigma$ follows the chi distribution with $m-d$ degrees of freedom. Then, the proof is completed by averaging the chi and chi-squared distributions.
\end{proof}

\begin{lem}\label{Lemma2}
Suppose that every $m$ columns in $\Phi$ exhibit full rank, $y$ is uniformly sampled from $\mathcal{R}(\Phi_{\Omega})$, and the sparsity $k$ is smaller than $m$, i.e., $|\Omega| < m$. Then, in the noiseless case, the following statements (a) and (b) hold for any set $\Delta \subseteq \{1:n\}$ such that $|\Delta| < m$.
\begin{enumerate}[(a)]
\item $\Delta \nsupseteq \Omega$  holds  if  $\left \| P^{\perp}_{\mathcal{R}(\Phi_{\Delta})}y \right\| > 0$.
\item $\Delta \supseteq \Omega$ holds almost surely  if  $\left \| P^{\perp}_{\mathcal{R}(\Phi_{\Delta})}y \right\| = 0$. 
\end{enumerate}
\end{lem}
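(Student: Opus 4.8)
The plan is to treat the two statements separately: (a) follows directly from the noiseless model and requires no rank hypothesis, whereas (b) is where the full-rank assumption does the real work, combined with a measure-zero argument to obtain the almost-sure conclusion.

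For part (a), I would prove the contrapositive. Suppose $\Delta \supseteq \Omega$. In the noiseless case $y = \Phi_{\Omega} x^{\Omega}$ lies in $\mathcal{R}(\Phi_{\Omega})$, and $\Omega \subseteq \Delta$ yields the inclusion of ranges $\mathcal{R}(\Phi_{\Omega}) \subseteq \mathcal{R}(\Phi_{\Delta})$. Hence $y \in \mathcal{R}(\Phi_{\Delta})$ and its projection onto the orthogonal complement vanishes, giving $\| P^{\perp}_{\mathcal{R}(\Phi_{\Delta})} y \| = 0$. Contrapositively, $\| P^{\perp}_{\mathcal{R}(\Phi_{\Delta})} y \| > 0$ forces $\Delta \nsupseteq \Omega$, which is exactly (a).

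For part (b), the hypothesis $\| P^{\perp}_{\mathcal{R}(\Phi_{\Delta})} y \| = 0$ is equivalent to $y \in \mathcal{R}(\Phi_{\Delta})$, while the noiseless model always gives $y \in \mathcal{R}(\Phi_{\Omega})$; thus $y$ lies in the intersection $\mathcal{R}(\Phi_{\Omega}) \cap \mathcal{R}(\Phi_{\Delta})$. I would reduce the statement to the following claim: if $\Omega \nsubseteq \Delta$, then $\mathcal{R}(\Phi_{\Omega}) \nsubseteq \mathcal{R}(\Phi_{\Delta})$, so that the intersection is a \emph{proper} subspace of the $|\Omega|$-dimensional space $\mathcal{R}(\Phi_{\Omega})$. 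Granting this, a uniformly drawn $y$ from $\mathcal{R}(\Phi_{\Omega})$ falls in a lower-dimensional subspace with probability zero, so $\Omega \subseteq \Delta$ holds almost surely.

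The heart of the argument is the claim, which I would establish by contradiction. If $\mathcal{R}(\Phi_{\Omega}) \subseteq \mathcal{R}(\Phi_{\Delta})$ but some $j \in \Omega \setminus \Delta$ exists, then $\phi_j \in \mathcal{R}(\Phi_{\Omega}) \subseteq \mathcal{R}(\Phi_{\Delta})$, so $\phi_j$ is a linear combination of the columns of $\Phi_{\Delta}$. This produces a nontrivial linear dependence among the columns indexed by $\{j\} \cup \Delta$. Since $j \notin \Delta$ and $|\Delta| < m$, the index set $\{j\} \cup \Delta$ has size at most $m$, so the hypothesis that every $m$ columns of $\Phi$ have full rank forces these columns to be linearly independent---a contradiction. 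This invocation of the full-rank condition on a set of at most $m$ indices is the main technical step; I expect it to be the crux, since it is precisely what rules out $\Omega \nsubseteq \Delta$. Once it is in place, the passage from ``proper subspace'' to ``probability zero'' is routine, as any proper linear subspace is a null set under any absolutely continuous sampling law for $y$ on $\mathcal{R}(\Phi_{\Omega})$.
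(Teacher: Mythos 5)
Your proof is correct. For part (a) it is the same contrapositive the paper uses ($\Delta \supseteq \Omega$ forces $\mathcal{R}(\Phi_{\Omega}) \subseteq \mathcal{R}(\Phi_{\Delta})$, hence a vanishing residual). For part (b) your route differs in organization in a way worth noting. The paper builds one global exceptional set $E$, the union of $\mathcal{R}(\Phi_{\Gamma})$ over all $\Gamma$ with $|\Gamma \cap \Omega| < |\Omega|$ and $|\Gamma| < m$, shows $y \notin E$ almost surely, and then reads off $\Delta \supseteq \Omega$ from $y \in \mathcal{R}(\Phi_{\Delta}) \cap (\mathcal{R}(\Phi_{\Omega}) \setminus E)$; since $E$ does not depend on $\Delta$, a single null event covers \emph{all} sets $\Delta$ simultaneously, which matters when the lemma is applied inside the algorithm, where $\Delta$ is constructed from $y$. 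You instead fix $\Delta$ and argue per-$\Delta$; that proves the lemma as literally stated, and the uniform version follows anyway by a finite union over the finitely many subsets $\Delta \subseteq \{1:n\}$ with $|\Delta| < m$, though you should state that step if the conclusion is to be invoked for a data-dependent $\Delta$. In exchange, your proof is more complete at the crux: the paper merely asserts ``from the assumption for $\Phi$'' that $\mathcal{R}(\Phi_{\Gamma}) \cap \mathcal{R}(\Phi_{\Omega})$ has rank strictly smaller than $\mathcal{R}(\Phi_{\Omega})$ whenever $\Gamma \nsupseteq \Omega$, whereas your contradiction argument---take $j \in \Omega \setminus \Delta$, write $\phi_j$ as a combination of the columns of $\Phi_{\Delta}$, and observe that the resulting dependence among the at most $m$ columns indexed by $\{j\} \cup \Delta$ violates the full-rank hypothesis---is precisely the justification the paper leaves implicit. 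Your closing remark that ``uniformly sampled from $\mathcal{R}(\Phi_{\Omega})$'' should be read as any absolutely continuous law on that subspace, under which proper subspaces are null, is also the correct reading and is what both proofs ultimately rest on.
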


\begin{proof}
The following subspace $E$ is defined as 
\begin{align}
E:=\cup_{\Gamma \subseteq \{1:n\} \textup{ s.t. } |\Gamma \cap \Omega|<|\Omega| \textup{ and }  |\Gamma|<m } \,\mathcal{R}(\Phi_{\Gamma}).
\end{align}
For any set $\Delta \subseteq \{1:n\}$ such that $|\Delta|<m$, $y \notin \mathcal{R}(\Phi_{\Delta})$ holds if $\left \| P^{\perp}_{\mathcal{R}(\Phi_{\Delta})}y \right\| > 0$. If $\Delta \supseteq \Omega$, $\mathcal{R}(\Phi_{\Delta})$ includes $ \mathcal{R}(\Phi_{\Omega})$ so that $y \in \mathcal{R}(\Phi_{\Delta})$. Thus, $y \notin \mathcal{R}(\Phi_{\Delta})$ implies that $\Delta \nsupseteq \Omega$ so that  the statement (a) is satisfied.

Given that from the assumption for $\Phi$, the rank of $\mathcal{R}(\Phi_{\Omega})$ is strictly larger than that of $\mathcal{R}(\Phi_{\Gamma}) \cap \mathcal{R}(\Phi_{\Omega})$ for any index set $\Gamma \subseteq \{1:n\}$ such that  $|\Gamma \cap \Omega|<|\Omega |$ and $ |\Gamma|<m$, the event region satisfying $y \in \mathcal{R}(\Phi_{\Omega}) \cap E$ has Lebesgue measure zero on the range space $\mathcal{R}(\Phi_{\Omega})$. Thus, the condition $y \in \mathcal{R}(\Phi_{\Omega}) \setminus E$ holds almost surely. 
Given that the condition $\left \| P^{\perp}_{\mathcal{R}(\Phi_{\Delta})}y \right\| =0$ implies that $y \in \mathcal{R}(\Phi_{\Delta})$, $y \in \mathcal{R}(\Phi_{\Delta}) \cap (\mathcal{R}(\Phi_{\Omega}) \setminus E)$ holds almost surely. And this condition $y \in \mathcal{R}(\Phi_{\Delta}) \cap (\mathcal{R}(\Phi_{\Omega}) \setminus E)$ implies that ${\Delta} \supseteq \Omega$  from the definition of $E$.  Thus, the statement (b) is satisfied.
\end{proof}

\balance
\bibliographystyle{IEEEtran}
\bibliography{IEEEabrv,bibdata4}

% Generated by IEEEtran.bst, version: 1.14 (2015/08/26)
\begin{thebibliography}{10}
\providecommand{\url}[1]{#1}
\csname url@samestyle\endcsname
\providecommand{\newblock}{\relax}
\providecommand{\bibinfo}[2]{#2}
\providecommand{\BIBentrySTDinterwordspacing}{\spaceskip=0pt\relax}
\providecommand{\BIBentryALTinterwordstretchfactor}{4}
\providecommand{\BIBentryALTinterwordspacing}{\spaceskip=\fontdimen2\font plus
\BIBentryALTinterwordstretchfactor\fontdimen3\font minus
  \fontdimen4\font\relax}
\providecommand{\BIBforeignlanguage}[2]{{%
\expandafter\ifx\csname l@#1\endcsname\relax
\typeout{** WARNING: IEEEtran.bst: No hyphenation pattern has been}%
\typeout{** loaded for the language `#1'. Using the pattern for}%
\typeout{** the default language instead.}%
\else
\language=\csname l@#1\endcsname
\fi
#2}}
\providecommand{\BIBdecl}{\relax}
\BIBdecl

\bibitem{candes2006compressive}
E.~J. Cand{\`e}s \emph{et~al.}, ``Compressive sampling,'' in \emph{Proceedings
  of the International Congress of Mathematicians}, vol.~3.\hskip 1em plus
  0.5em minus 0.4em\relax Madrid, Spain, 2006, pp. 1433--1452.

\bibitem{donoho2006compressed}
D.~L. Donoho, ``{C}ompressed sensing,'' \emph{IEEE Transactions on Information
  Theory}, vol.~52, no.~4, pp. 1289--1306, 2006.

\bibitem{metzler2016denoising}
C.~A. Metzler, A.~Maleki, and R.~G. Baraniuk, ``From denoising to compressed
  sensing,'' \emph{IEEE Transactions on Information Theory}, vol.~62, no.~9,
  pp. 5117--5144, 2016.

\bibitem{yang2010image}
J.~Yang, J.~Wright, T.~S. Huang, and Y.~Ma, ``Image super-resolution via sparse
  representation,'' \emph{IEEE Transactions on Image Processing}, vol.~19,
  no.~11, pp. 2861--2873, 2010.

\bibitem{heckel2016super}
R.~Heckel, V.~I. Morgenshtern, and M.~Soltanolkotabi, ``Super-resolution
  radar,'' \emph{Journal of the Institute of Mathematics and its Applications
  (IMA)}, vol.~5, no.~1, pp. 22--75, 2016.

\bibitem{lee2016map}
N.~Lee, ``{MAP} support detection for greedy sparse signal recovery algorithms
  in compressive sensing,'' \emph{IEEE Transactions on Signal Processing},
  vol.~64, no.~19, pp. 4987--4999, 2016.

\bibitem{wang2012generalized}
J.~Wang, S.~Kwon, and B.~Shim, ``{G}eneralized orthogonal matching pursuit,''
  \emph{IEEE Transactions on Signal Processing}, vol.~60, no.~12, pp.
  6202--6216, 2012.

\bibitem{needell2009cosamp}
D.~Needell and J.~A. Tropp, ``{C}o{S}a{MP}: {I}terative signal recovery from
  incomplete and inaccurate samples,'' \emph{Applied and Computational Harmonic
  Analysis}, vol.~26, no.~3, pp. 301--321, 2009.

\bibitem{dai2009subspace}
W.~Dai and O.~Milenkovic, ``Subspace pursuit for compressive sensing signal
  reconstruction,'' \emph{IEEE transactions on Information Theory}, vol.~55,
  no.~5, pp. 2230--2249, 2009.

\bibitem{davies2012rank}
M.~E. Davies and Y.~C. Eldar, ``{R}ank awareness in joint sparse recovery,''
  \emph{IEEE Transactions on Information Theory}, vol.~58, no.~2, pp.
  1135--1146, 2012.

\bibitem{kwon2014multipath}
S.~Kwon, J.~Wang, and B.~Shim, ``Multipath matching pursuit,'' \emph{IEEE
  Transactions on Information Theory}, vol.~60, no.~5, pp. 2986--3001, 2014.

\bibitem{rossi2012spatial}
M.~Rossi, A.~M. Haimovich, and Y.~C. Eldar, ``Spatial compressive sensing in
  {MIMO} radar with random arrays,'' in \emph{46th Annual Conference on
  Information Sciences and Systems (CISS)}, 2012, pp. 1--6.

\bibitem{wipf2004sparse}
D.~P. Wipf and B.~D. Rao, ``Sparse {B}ayesian learning for basis selection,''
  \emph{IEEE Transactions on Signal Processing}, vol.~52, no.~8, pp.
  2153--2164, 2004.

\bibitem{foucart2013mathematical}
S.~Foucart and H.~Rauhut, \emph{{A} mathematical introduction to compressive
  sensing}.\hskip 1em plus 0.5em minus 0.4em\relax Springer, 2013.

\bibitem{wang2016joint}
B.~Wang, L.~Dai, T.~Mir, and Z.~Wang, ``Joint user activity and data detection
  based on structured compressive sensing for {NOMA},'' \emph{IEEE
  Communications Letters}, vol.~20, no.~7, pp. 1473--1476, 2016.

\bibitem{cirik2017multi}
A.~C. Cirik, N.~M. Balasubramanya, and L.~Lampe, ``Multi-user detection using
  {ADMM}-based compressive sensing for uplink grant-free {NOMA},'' \emph{IEEE
  Wireless Communications Letters}, 2017.

\bibitem{tse2005fundamentals}
D.~Tse and P.~Viswanath, \emph{Fundamentals of wireless communication}.\hskip
  1em plus 0.5em minus 0.4em\relax Cambridge university press, 2005.

\end{thebibliography}

\ifCLASSOPTIONcaptionsoff
  \newpage
\fi

\end{document}